\definecolor{Green}{RGB}{0,200,0}
\definecolor{LightGreen}{RGB}{212,251,121}
\definecolor{Red}{RGB}{204,0,0}
\definecolor{LightRed}{RGB}{255,153,153}
\definecolor{BrickRed}{rgb}{.625,.25,.25}
\definecolor{markergreen}{rgb}{0.6, 1.0, 0}
\definecolor{darkgreen}{rgb}{0, .5, 0}
\definecolor{darkred}{rgb}{.7,0,0}
\theoremstyle{plain}
\newtheorem{theorem}{Theorem}%[section]
\newtheorem{proposition}[theorem]{Proposition}
\theoremstyle{definition} %%
\newcommand{\E}{{\mathbb{E}}}
\providecommand{\R}{{\mathbb{R}}}
\newcommand{\e}{\ensuremath{e}}
\newcommand{\dd}{{\rm d}}
\begin{document}
\title{\Large\bf Risk factor aggregation and stress testing}
% \title{\Large\bf Stress testing with aggregated risk factors} %
\author{Natalie Packham\footnotemark} %
\maketitle %

\begin{abstract}
  Stress testing refers to the application of adverse financial or
  macroeconomic scenarios to a portfolio. For this purpose, financial
  or macroeconomic risk factors are linked with asset returns,
  typically via a factor model. We expand the range of risk factors by
  adapting dimension-reduction techniques from unsupervised learning,
  namely PCA and autoencoders.  This results in aggregated risk
  factors, encompassing a global factor, factors representing broad
  geographical regions, and factors specific to cyclical and defensive
  industries.  As the adapted PCA and autoencoders provide an
  interpretation of the latent factors, this methodology is also
  valuable in other areas where dimension-reduction and explainability
  are crucial.
\end{abstract}

\noindent Keywords: Factor model, principal component analysis,
autoencoder, (explainable) machine learning, financial stress testing
\medskip

\noindent JEL classification: C53, C63, G17\footnotetext[1]{%
  \noindent Natalie Packham, Berlin School of Economics and Law,
  Badensche Str.\ 52, 10825 Berlin, Germany. Email:
  packham@hwr-berlin.de \medskip
  
  This research was partially supported by IFAF Berlin (Institut f\"ur
  angewandte Forschung Berlin e.V.) under the IFAF Explorativ scheme.
}

\section{Introduction}
\label{sec:introduction}

Stress testing refers to a set of methods and tools that assess the
impact of an adverse scenario on a financial portfolio. An adverse
scenario could, for example, be described as a downturn of
macroeconomic and financial risk factors.  Typically, a factor model
links the risk factors with asset returns, which in turn allows to
calculate the impact of the stress scenario on a portfolio.  Using
techniques from statistics and machine learning, we extend the
universe of risk factors by aggregating existing risk factors into
higher-level risk factors, such as a global risk factor, broad
geographic regions or cyclical and non-cyclical industries. The
methods developed also allow to evaluate the strength or weakness over
time of aggregated risk factors, such as the intensity of global risk,
which changes substantially over time.

The main underlying ideas employ techniques from unsupervised
learning, in particular Principal Component Analysis (PCA)
\citep{Jolliffe2002} and Autoencoders (AEs)
\citep{Goodfellow2016}. Both methods act on high-dimensional data,
with PCA a rotation of the data spaces' axes such that the resulting
representation of the data is orthogonal across dimensions while
successively maximising the variance in each dimension. This produces
a linear representation of the data through latent factors called
principal components (PCs), where discarding lower dimensions
optimally retains a maximum amount of variance. Conceptually, AEs can
be thought of as extending PCA to non-linear factors. In both cases,
the resulting factors are latent, and -- except in specialised
settings, such as interest rate term structures, see e.g.\
\citep{Litterman1991,Frye1997,Jamshidian1996,Loretan1997} -- have no
canonical interpretation. We refer to e.g.\
\citep{Laloux2000,Avellaneda2010} for a discussion on the
interpretability of PCs from equity returns.

In this paper, first we apply PCA to create a global stress
scenario. It turns out that at least two PCs must be retained from the
latent factor specification to address a global scenario. In order to
determine the number of relevant PCs and their ability to explain
certain risk factors, we use the Kaiser-Guttman-criterion and the
so-called participation ratio. Second, we adapt PCA and the AE by
first clustering the risk factors, and applying PCA / AE to each
category. Combining latent factors from each categories into on model
allows to give the resulting factors an interpretation.  It turns out
that the modified AE, which we call {\em clustered AE} calibrates
better to the risk factor data than the {\em clustered PCA}. This is
an indication that non-linearities are present in the data. Given the
temporal nature of the historical data lends itself to adding a long
short-term memory (LSTM) component to the clustered autoencoder. This
outperforms clustered PCA, but interestingly performs worse than a
simple clustered AE.

Given the aggregated risk factors, we demonstrate their use in stress
testing by applying stress scenarios to the global risk factor model,
European risk factor and the cyclical industries factor. All stress
scenarios are applied to equally-weighted portfolios of the DAX
constituents and the S\&P 500 constituents. It turns out that
clustered PCA and clustered AE methods produce similar stress test
results, with the clustered AE with LSTM producing stronger stress
impact and the simple clustered AE the least stress impact.

The fields of explainable Artificial Intelligence (XAI) and
explainable Machine Learning (XML) are vastly expanding, as missing
transparency and missing interpretability are one aspect holding back
the use of Machine Learning (ML) techniques in finance. More
specifically, XML attempts to relate predictions and forecasts of ML
models to the inputs. The ability to do so is becoming more important
with regulatory changes, such as the EU-AI Act \citep{EC2021}; see
also \citep{EBA2021a} for a discussion of ML models in the context of
financial risk models by the European Banking Authority (EBA). The
methods developed in this paper can be easily applied in the field of
XML to make black-box dimension reduction techniques explainable.

The paper is organised as follows: Section
\ref{sec:financ-stress-test} gives a brief overview of factor models
and the classical stress testing methodology. The data used in this
paper is explained in Section \ref{sec:data}. The different methods of
aggregating risk factors are developed in Section
\ref{sec:princ-comp-analys}. Section \ref{sec:stress-test-appl}
provides an example application of stress testing and Section
\ref{sec:conclusion} concludes.

All computations were done in Wolfram Mathematica.

\section{Financial stress testing}
\label{sec:financ-stress-test}

Stress testing refers to a diverse set of methods investigating the
value and risk of financial portfolios under adverse scenarios.  In a
classical setting, a factor model links financial asset returns with
observable risk factors, such as geographic regions and industries. A
stress scenario is then typically defined as either a hypothetical or
a historical scenario on the risk factors, with its impact evaluated
on a financial position. The simplest model linking asset returns and
risk factors is a linear regression model, e.g.\
\citep{Kupiec1998,Dowd2002,Jorion2007}. \citep{Bonti2006} use this
approach to illustrate stress testing of credit
portfolios. \citep{Packham2019,Packham2023} adapt the linear factor
model approach to link correlations with risk factors in order to
apply stress test scenarios on asset correlations. Further approaches
can be found in e.g.\ \citep{Alexander2008}.\footnote{Regulators
  require financial institutions to conduct stress tests, see e.g.\
  the European Capital Requirements Regulation (CRR), Article 290:
  ``[The stress testing programme for CCR (credit counterparty risk)]
  shall provide for at least monthly exposure stress testing of
  principal market risk factors such as interest rates, FX, equities,
  credit spreads, and commodity prices for all counterparties of the
  institution, in order to identify, and enable the institution when
  necessary to reduce outsized concentrations in specific directional
  risks.'' and ``It shall apply at least quarterly multifactor stress
  testing scenarios and assess material non-directional risks
  including yield curve exposure and basis risks. Multiple-factor
  stress tests shall, at a minimum, address the following scenarios in
  which the following occurs: (a) severe
  economic or market events have occurred; ...''. See\\
  \href{https://www.eba.europa.eu/regulation-and-policy/single-rulebook/interactive-single-rulebook/1880}{https://www.eba.europa.eu/regulation-and-policy/single-rulebook/interactive-single-rulebook/1880}.\\
  See also \citep{EBA2021} for details on regulatory requirements of
  stress testing.  }

Assuming $d\ll p$, a factor model expresses a vector of asset returns
$(r_1, \ldots, r_p)$ as a linear model of {\em risk factor returns\/}
$F_1, \ldots, F_d$ via
\begin{equation}
  \label{eq:5}
  r_i = \alpha_i + \beta_{i1} F_1 + \beta_{i2} F_2 + \cdots +
  \beta_{id} F_d + \varepsilon_i%  = \alpha_i + \mathbf{\beta}_i
  % \mathbf{F} + \varepsilon_i
  , \quad i=1, \ldots, p,
\end{equation}
where $% \mathbf{\beta}_i=(
\beta_{i1}, \ldots, \beta_{id}% )
$ are {\em factor coefficients\/}, {\em factor weights} or {\em factor
  loadings},
$\alpha_i$ is a constant and $\varepsilon_i$ is the {\em idiosyncratic
  component} of the $i$-th asset return. It is common to assume that
the idiosyncratic components are uncorrelated. The factors
$F_1, \ldots, F_d$ are typically observable, such as index returns of
geographic regions and industries. 
Examples of factor models in credit risk management are Moody's KMV
\citep{Crosbie2002} and CreditMetrics \citep{Gupton1997}, see also
\citep{Bluhm2003,Bonti2006}.

The ``classical'' approach to stress testing separates risk factors
into a subset $\mathbf F_s$ of the risk factors, 
the {\em core\/} factors that are stressed and the complement
$\mathbf F_u$ of $\mathbf F_s$, the so-called {\em peripheral\/}
factors that are only indirectly affected by the stress
scenario. Assuming that the covariance matrix of risk factors remains
unaltered by the stress scenario and further assuming the risk factors
are normally distributed, the optimal estimator of $\mathbf F_u$
conditional on $\mathbf F_s$ is (e.g.\ Theorem \S 13.2 of
\cite{Shiryaev1996})
\begin{equation}
  \label{eq:6}
  \E(\mathbf F_u| \mathbf F_s) = \E(\mathbf F_u) + \Sigma_{us}
  \Sigma_{ss}^{-1} (\mathbf F_s-\E (\mathbf F_s)),
\end{equation}
where $\Sigma_{us}$ is the matrix containing the covariances of
$\mathbf F_u$ and $\mathbf F_s$ and where $\Sigma_{ss}$ is the
covariance matrix of $\mathbf F_s$.

\section{Data}
\label{sec:data}

\begin{table}[t]
  \centering
  \begin{tabular}{|c|c|c|}
    \cline{1-1}\cline{3-3}
    {\bf Geographical regions} && {\bf GICS industries}\\\cline{1-1}\cline{3-3}
    {\bf Europe} && {\bf Cyclical industries} \\\cline{1-1}\cline{3-3}
    % CI-
    Europe && % ACWI-
              Materials \\
    France &&% ACWI-
              Industrials \\
    UK && % ACWI-
          Consumer Discretionary\\
    Italy && % ACWI-
             Financial\\
    Germany &&  % ACWI-
               IT\\\cline{1-1}
    {\bf Asia Pacific} &&% ACWI-
                          Real Estate \\\cline{1-1}\cline{3-3}
                          % CI-
    Pacific && {\bf Defensive industries}\\\cline{3-3}
    % CI-
    Singapore &&% ACWI-
                 Energy\\
                 % CI-
    Japan &&% ACWI-
             Consumer Staples\\
             % CI-
    Hong Kong &&% ACWI-
                 Health Care\\
                 % CI-
    Australia && % ACWI-
                 Communication Services\\\cline{1-1}
    {\bf North America} &&% ACWI-
                           Utilities \\\cline{1-1}\cline{3-3}
                           % CI-
    United States \\
    % CI-
    Canada \\\cline{1-1}
    {\bf Emerging markets}  \\\cline{1-1}
    % CI-
    EM Latin America \\
    % CI-
    EM Europe + Middle East + Africa \\
    % MSCI
    EM Asia \\\cline{1-1}
  \end{tabular}
  \caption{Stock market indices used as proxies for risk factors. All
    indices are MSCI stock indices. The indics are split into six
    categories, which serve as a basis for building aggregated
    indices. Data source: Refinitiv Eikon.}
  \label{tab:factors}
\end{table}

The primary data set is composed of MSCI stock indices representing
the observable risk factors. The indices consist of 16 geographic
regions and 11 industries according to the MSCI Global Industry
Classification Standard (GICS). Table \ref{tab:factors} shows the
indices as well as their (manual) classification into six categories,
which will be the basis for building aggregated risk factors. The data
set consists of daily equity data from Refinitiv Eikon in the period
from January 1999 until February 2023 (6214 observations). Figure
\ref{fig:scattermatrix} in the appendix shows a scatter matrix with
distribution properties of the data.

In addition to this data set, we will work with stock returns of the
DAX firms and S\&P 500 firms when devising stress tests. In order to
be able to include recent firms, the length of this data set is
limited to three years of daily data (750 observations).

\section{Aggregated risk factors}

To aggregate existing risk factors into higher-order risk factors, we
employ and modify PCA and autoencoders, commonly used
dimension-reduction techniques.

\subsection{Principal Component Analysis}
\label{sec:princ-comp-analys}

{\em Principal component analysis (PCA)\/} refers to a representation
of a random vector (or multivariate sample data points) as a linear
factor model with unobservable (latent) factors obtained from the
random vector (or the data) itself. The main idea underlying PCA is a
rotation of the coordinate axes in such a way that the factors
represent coordinate axes that are orthogonal, with the first factor
capturing the maximum variance of the data, the second factor
capturing the second most variance, and so on. If the data are
sufficiently correlated, then one can reduce the dimension whilst
retaining a high proportion of the variance. Mathematically, PCA
relates to the eigendecomposition of a covariance or correlation
matrix, with the eigenvectors the principal components (PCs) and the
eigenvalues expressing the variance captured by each PC. PCA goes back
to \citep{Pearson1901} and \citep{Hotelling1933}. For a detailed
treatment, see e.g.\ \citep{Jolliffe2002,James2013,Murphy2022}.

Let the $n\times d$ data $\mathbf X$ be standardised. The {\em first
  PC\/} (called loading vector in \citep{James2013})
$\phi_{11}, \ldots, \phi_{d1}$ is determined from finding so-called
{\em scores\/}
\begin{equation}
  \label{eq:4}
  z_{i1} = \phi_{11} x_{i1} + \phi_{21} x_{i2} + \cdots + \phi_{d1}
  x_{id},\quad i=1,\ldots, n,
\end{equation}
that have largest sample variance, subject to the constraint
$\sum_{j=1}^d \phi_{j1}^2=1$. In other words, the first PC vector
solves the optimisation problem
\begin{equation*}
  \max_{\phi_{11}, \ldots, \phi_{d1}} \Bigg\{ \frac{1}{n} \sum_{i=1}^n
  \underbrace{\left(\sum_{j=1}^d \phi_{j1} x_{ij}\right)^2}_{=z_{i1}^2}\Bigg\} \quad\text{
    subject to }\quad \sum_{j=1}^d \phi_{j1}^2 =1. 
\end{equation*}
The second PC vector $\phi_{12}, \ldots, \phi_{d2}$ is obtained in a
similar way by determining the scores $z_{i2}$, $i=1, \ldots, n$, that
have maximum variance out of all linear combinations {\em
  uncorrelated\/} with $z_{i1}$, $i=1,\ldots,n$.  Higher principal
component are determined likewise.

In practice, principal components are found via the eigendecomposition
of the correlation matrix of $\mathbf X$ (which is assumed to be
standardised). The $n\times d$ matrix $\mathbf Z$ of scores and
$d\times d$ matrix $\mathbf \Phi$ of PCs can be written in compact
form as
\begin{equation}
  \label{eq:2}
  \mathbf Z = \mathbf X\  \mathbf \Phi. 
\end{equation}
The scores can be viewed as factors, giving a factor model
\begin{equation}
  \label{eq:1}
  \mathbf{X} = \mathbf {Z}\, \mathbf \Phi'. 
\end{equation}
The last equation follows, as the eigenvectors $\mathbf \Phi$ 
satisfy $\mathbf \Phi^{-1}=\mathbf \Phi'$, due to the symmetry of the
correlation matrix.

If $\mathbf X$ is such that $n$ is the number of dimensions and $d$ is
the number of features, then the above conducts ``PCA on the
features'', which is of predominant interest for most applications,
such as dimension reduction. In this case, the columns of
$\mathbf \Phi$ are the eigenvectors of $\mathbf X'\, \mathbf X$ (a
$d\times d$ matrix) and the PC scores $\mathbf Z$ -- a $n\times d$
matrix or a dimension reduced variant with fewer columns -- serve as
the latent risk factors. In some papers that apply PCA to finance
data, e.g.\ \citep{Loretan1997}, the authors conduct ``PCA on the
observations'', i.e., on the eigenvectors of $\mathbf X\, \mathbf X'$
(an $n\times n$ matrix). The (elementary) Proposition below shows that
in this case the eigenvectors correspond to a scaled version of the PC
scores of $\mathbf X'\, \mathbf X$ and vice versa. In other words, one
can equivalently use the eigenvectors of $\mathbf X\, \mathbf X'$ are
latent factors.

\begin{proposition}
  Let $\mathbf L$ be the $d\times d$ diagonal matrix with the
  eigenvalues $l_1, \ldots, l_d$ on the diagonal and $\mathbf \Phi$ be
  the $d\times d$ matrix of eigenvectors of $\mathbf X'\, \mathbf X$,
  and let $\tilde{\mathbf L}$, $\tilde{\mathbf \Phi}$ be the
  corresponding matrices of $\mathbf X\, \mathbf X'$,
  respectively. Then,
  \begin{enumerate}[(i)]
  \item $\tilde l_k= l_k$, for $k=1,\ldots,d$, and $\tilde l_k=0$, for
    $k>d$.
  \item The eigenvectors of $\mathbf X\, \mathbf X'$ are given as
    \begin{equation*}
      \tilde{\mathbf\Phi} = \mathbf X\, \mathbf\Phi\, \mathbf L^{-1/2} =
      \mathbf Z\, \mathbf L^{-1/2},
    \end{equation*}
    where $\mathbf Z$ is the $n\times d$ matrix of PC scores of
    $\mathbf X\, \mathbf X'$. 
  \item The eigenvectors of $\mathbf X'\, \mathbf X$ are given as
    \begin{equation*}
      \mathbf\Phi = \mathbf X'\, \tilde{\mathbf\Phi}\, \mathbf
      L^{-1/2} = 
      \tilde{\mathbf Z}\, \mathbf L^{-1/2},
    \end{equation*}
    where $\tilde{\mathbf Z}$ is the $d\times n$ matrix of PC scores
    of $\mathbf X'\, \mathbf X$.
  \end{enumerate}
\end{proposition}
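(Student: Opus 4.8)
The plan is to verify the three claims by direct matrix manipulation using only the spectral theorem for symmetric matrices; the singular value decomposition gives a slicker but less self-contained route, which I would note at the end. Since $\mathbf{X}'\mathbf{X}$ is symmetric and positive semidefinite, write its eigendecomposition $\mathbf{X}'\mathbf{X} = \mathbf{\Phi}\mathbf{L}\mathbf{\Phi}'$ with $\mathbf{\Phi}'\mathbf{\Phi} = \mathbf{\Phi}\mathbf{\Phi}' = \mathbf{I}_d$, so that by \eqref{eq:2} the score matrix is $\mathbf{Z} = \mathbf{X}\mathbf{\Phi}$. For $\mathbf{L}^{-1/2}$ to be meaningful in parts (ii) and (iii) I would assume $\mathbf{X}$ has full column rank --- i.e.\ all $l_k > 0$, the generic situation for standardised data with $n > d$ --- and remark at the end that otherwise one restricts all formulas to the span of the nonzero eigenvalues.

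For (ii), set $\mathbf{W} := \mathbf{Z}\mathbf{L}^{-1/2} = \mathbf{X}\mathbf{\Phi}\mathbf{L}^{-1/2}$ and compute $(\mathbf{X}\mathbf{X}')\mathbf{W} = \mathbf{X}(\mathbf{X}'\mathbf{X})\mathbf{\Phi}\mathbf{L}^{-1/2} = \mathbf{X}\mathbf{\Phi}\mathbf{L}\mathbf{L}^{-1/2} = \mathbf{W}\mathbf{L}$, so the columns of $\mathbf{W}$ are eigenvectors of $\mathbf{X}\mathbf{X}'$ with eigenvalues $l_1, \dots, l_d$; orthonormality follows from $\mathbf{W}'\mathbf{W} = \mathbf{L}^{-1/2}\mathbf{\Phi}'(\mathbf{X}'\mathbf{X})\mathbf{\Phi}\mathbf{L}^{-1/2} = \mathbf{L}^{-1/2}\mathbf{L}\mathbf{L}^{-1/2} = \mathbf{I}_d$. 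This gives (ii), and en passant (i): the $l_k$ are eigenvalues of the $n\times n$ matrix $\mathbf{X}\mathbf{X}'$, and since $\operatorname{rank}(\mathbf{X}\mathbf{X}') = \operatorname{rank}(\mathbf{X}) \le d$ the remaining $n - d$ eigenvalues must vanish, with no further nonzero ones (equivalently, one may invoke the standard fact that $\mathbf{X}\mathbf{X}'$ and $\mathbf{X}'\mathbf{X}$ share the same nonzero spectrum with multiplicities).

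Claim (iii) then follows by applying (ii) and (i) with $\mathbf{X}$ replaced by $\mathbf{X}'$: the eigenvector matrix of $(\mathbf{X}')'\mathbf{X}' = \mathbf{X}\mathbf{X}'$ is $\tilde{\mathbf{\Phi}}$, its nonzero eigenvalue matrix is again $\mathbf{L}$ by (i), and the displayed formula above becomes $\mathbf{\Phi} = \mathbf{X}'\tilde{\mathbf{\Phi}}\mathbf{L}^{-1/2} = \tilde{\mathbf{Z}}\mathbf{L}^{-1/2}$ with $\tilde{\mathbf{Z}} := \mathbf{X}'\tilde{\mathbf{\Phi}}$, which one checks coincides with the score matrix of $\mathbf{X}'\mathbf{X}$ in the sense of \eqref{eq:2}. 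I do not expect a genuine obstacle --- the statement is elementary. The two points needing care are the invertibility of $\mathbf{L}^{-1/2}$ (hence the full-rank assumption, or restriction to nonzero eigenspaces) and the dimension bookkeeping: $\tilde{\mathbf{\Phi}}$ is nominally $n\times n$ while $\mathbf{X}\mathbf{\Phi}\mathbf{L}^{-1/2}$ is $n\times d$, so the identity in (ii) should be read as identifying the $d$ eigenvectors of $\mathbf{X}\mathbf{X}'$ attached to the shared nonzero spectrum, the remaining $n-d$ columns being an arbitrary orthonormal basis of $\ker(\mathbf{X}')$. Finally, I would record the one-line alternative via the SVD $\mathbf{X} = \mathbf{U}\mathbf{D}\mathbf{V}'$: there $\mathbf{\Phi} = \mathbf{V}$, $\tilde{\mathbf{\Phi}} = \mathbf{U}$, $\mathbf{L} = \mathbf{D}'\mathbf{D}$, and (i)--(iii) can simply be read off.
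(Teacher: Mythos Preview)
Your argument is correct. The paper takes the SVD shortcut you mention at the end as its \emph{entire} proof: it writes $\mathbf X = \tilde{\mathbf\Phi}\,\mathbf L^{1/2}\,\mathbf\Phi'$ and $\mathbf X' = \mathbf\Phi\,\mathbf L^{1/2}\,\tilde{\mathbf\Phi}'$, notes that $\tilde{\mathbf\Phi}'\tilde{\mathbf\Phi}$ and $\mathbf\Phi'\mathbf\Phi$ are identities, and then reads off (ii) from $\mathbf Z = \mathbf X\mathbf\Phi = \tilde{\mathbf\Phi}\,\mathbf L^{1/2}$. Your route instead starts from the eigendecomposition of $\mathbf X'\mathbf X$ alone and verifies directly that $\mathbf W = \mathbf X\mathbf\Phi\mathbf L^{-1/2}$ satisfies the eigenvector equation for $\mathbf X\mathbf X'$ and is orthonormal, deriving (i) from a rank count and (iii) by symmetry. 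This is more self-contained --- it does not presuppose the SVD --- and you are also more careful than the paper about the two genuine subtleties (invertibility of $\mathbf L$ and the $n\times n$ versus $n\times d$ dimension mismatch for $\tilde{\mathbf\Phi}$), which the paper's proof leaves implicit. The trade-off is length: the SVD packaging gives all three parts in two lines, whereas your argument spells out what the SVD is doing under the hood.
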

\begin{proof}
  The singular value decompositions (SVDs) of $\mathbf X$ and
  $\mathbf X'$ are, cf.\ e.g.\ \citep{Jolliffe2002},
  \begin{align*}
    \mathbf X& =\tilde{\mathbf \Phi}\, \mathbf L^{1/2}\, \mathbf \Phi'\\
    \mathbf X' &= \mathbf \Phi\, \mathbf L^{1/2}\, \tilde{\mathbf \Phi}', 
  \end{align*}
  where $\tilde{\mathbf \Phi}'\, \tilde{\mathbf \Phi}$ and
  $\mathbf \Phi'\mathbf \Phi$ are identity matrices and $\mathbf L$
  are the eigenvalues.
  Using that the PC scores are
  $\mathbf Z=\mathbf X\, \mathbf \Phi = \tilde{\mathbf \Phi}\, \mathbf
  L^{1/2}\, \mathbf \Phi'\, \mathbf \Phi = \tilde{\mathbf \Phi}\,
  \mathbf L^{1/2}$ gives
  $\tilde{\mathbf\Phi} = \mathbf Z\, \mathbf L^{-1/2} = \mathbf X\,
  \mathbf \Phi\, \mathbf L^{-1/2}$ .
\end{proof}

\subsection{PC contributions}
\label{sec:interpretation-pcs}

One way of giving the PCs an interpretation is to evaluate the
correlations between the data and the scores (which are the projection
of the data to each PC). Still assuming that the $\mathbf X$ is
standardised, from \eqref{eq:1}, using that the PCs are uncorrelated,
have variances $\lambda_i$, $i=1, \ldots, d$, it follows that
\begin{equation*}
  \text{Corr}(x_{\cdot j}, z_{\cdot i})
  = \frac{\text{Cov}(x_{\cdot j}, z_{\cdot i})} {\sqrt{\lambda_i}} %
  = \frac{\E[\phi_{ji}
    z_{\cdot i} z_{\cdot i}]}{\sqrt{\lambda_i}}
  = \phi_{ji} \sqrt{\lambda_i},
\end{equation*}
$j=1,\ldots, d$.  This expresses that the correlation of data and
scores are just the PCs scaled by the PCs standard deviation (which
can be considered a measure of the ``importance'' of the
PC).\footnote{ Some strands of the literature, e.g.\ \citep{Kent1979},
  call the correlations between data and scores loadings, while other
  authors, e.g.\ \citep{James2013}, call $\phi_{ji}$,
  $i=1, \ldots, d$, $j=1, \ldots, d$, loadings. We stick with the
  latter.}

\begin{figure}[t]
  \centering
  \includegraphics[width=.7\textwidth]{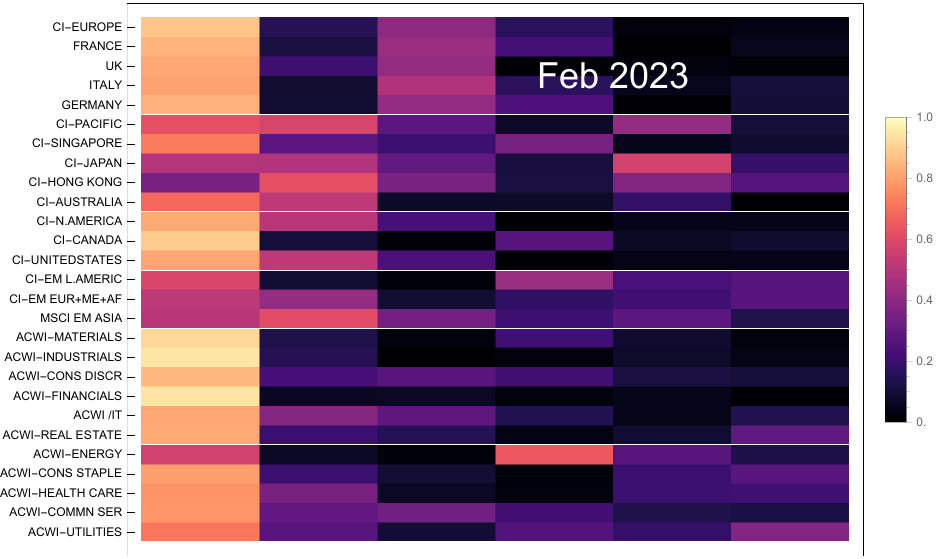}
  \caption{Heatmap of correlations of first six PCs for the risk
    factors. Each column shows the absolute correlations of the stock
    indicies with the scores of the respective PC. The stock indices
    comprise six categories: Europe, Asia Pacific, North America,
    Emerging Markets, cyclical indusries, defensive industries, which
    are separated by white lines. The correlations are calculated on a
    time window of 250 days at the end of February 2023.}
  \label{fig:correlations1}
\end{figure}

\begin{figure}[t]
  \centering \includegraphics[width=.45\textwidth]{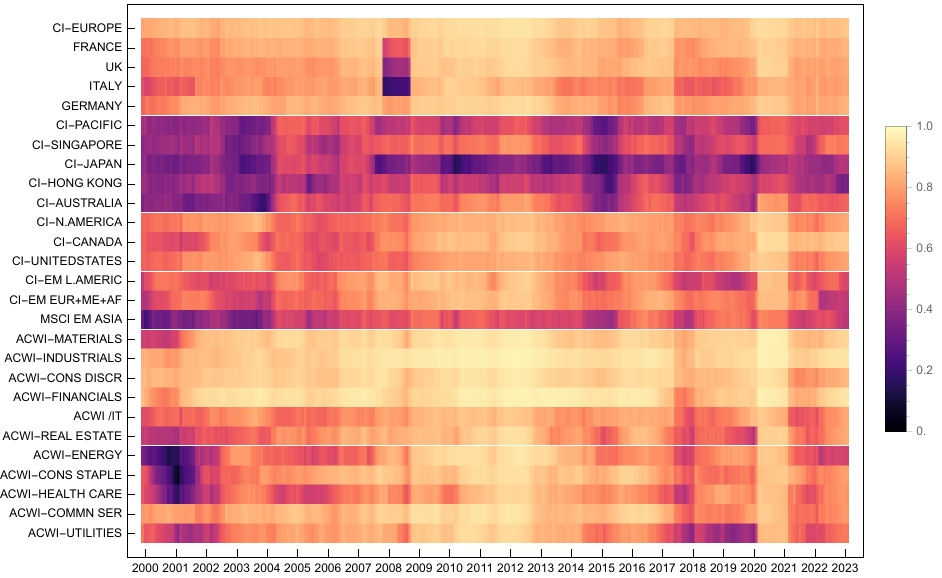}\ \
  \includegraphics[width=.45\textwidth]{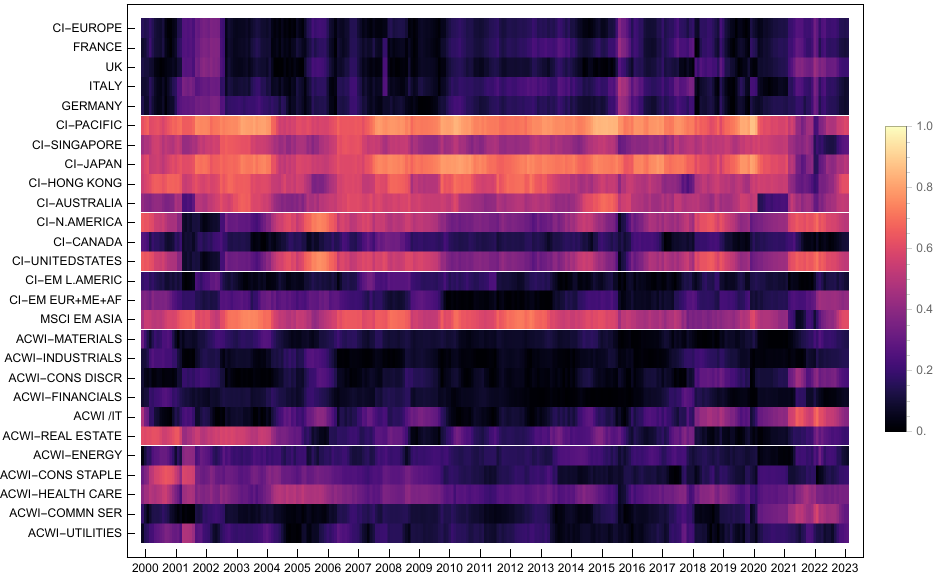}\medskip\\
  \includegraphics[width=.45\textwidth]{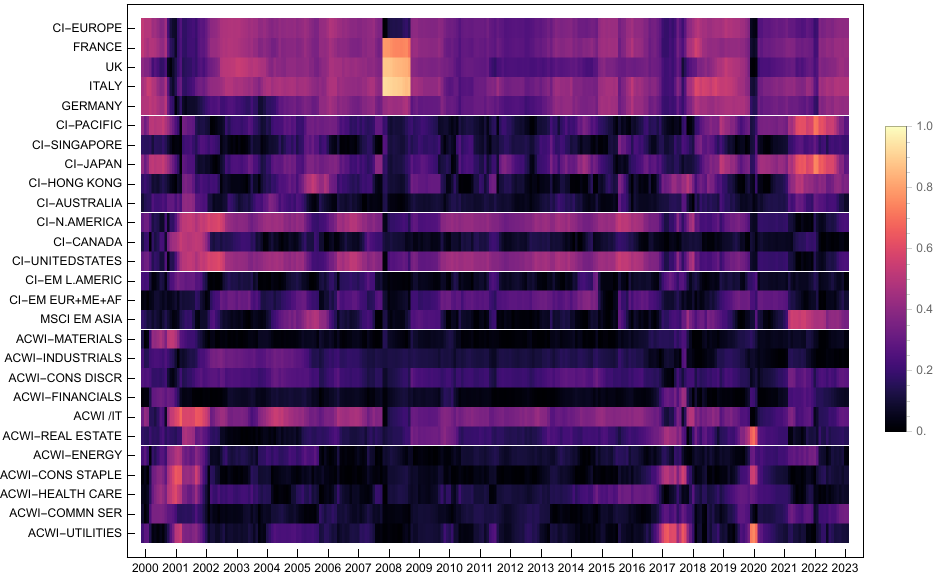}\ \
  \includegraphics[width=.45\textwidth]{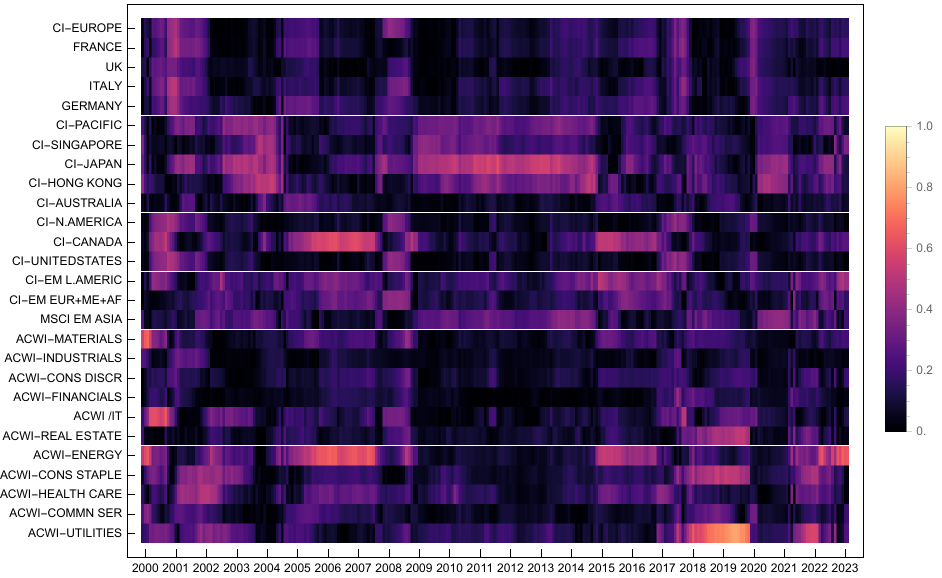}
  \caption{Correlations of the first four PCs through time, ordered
    top left, top right, bottom left, bottom right. Correlations are
    calculated on a rolling time window of 250 days.}
  \label{fig:correlations2}
\end{figure}

Figure \ref{fig:correlations1} shows the absolute correlations of the
first six PCs for each stock index in the set of risk
factors. Absolute correlations were used as we are primarily
interested in the strength of the correlations and as the sign of the
correlations is not uniquely specified. Figure \ref{fig:correlations2}
shows the absolute correlations of the first four PCs through time,
calculated at the end of each month on a time window of 250 days. As
expected, the first PC is the strongest factor. Quite surprisingly,
stock indices in the Asia-Pacific category have weaker correlations in
the first PC, but dominate the second PC. This effect is persistent
even if the data are changed to weekly data as opposed to daily data,
so it is unlikely to be a pure time zone effect. The third PC is
relatively strong for all European stock indices. The following
observations can be summarised: the first PC acts as a global risk
factor and the second PC acts as an Asia-Pacific risk
factor. Correlations change over time, so the strength of global risk
changes. For example, it is relatively weak in 1999-2000 (the
beginning of the data set) and in 2021-2023 (end of data set), while
it is strong in the period 2010-2012. We can also see that in several
months of the 2007-2008, European stock behaved differently from the
rest of the world.

Summarising, a global risk factor should incorporate the information
of the first two PCs.

\subsection{Relevance of PCs and PC interpretation}
\label{sec:relevance-pcs-pc}

\begin{figure}[t]
  \centering
  \includegraphics[width=.445\textwidth]{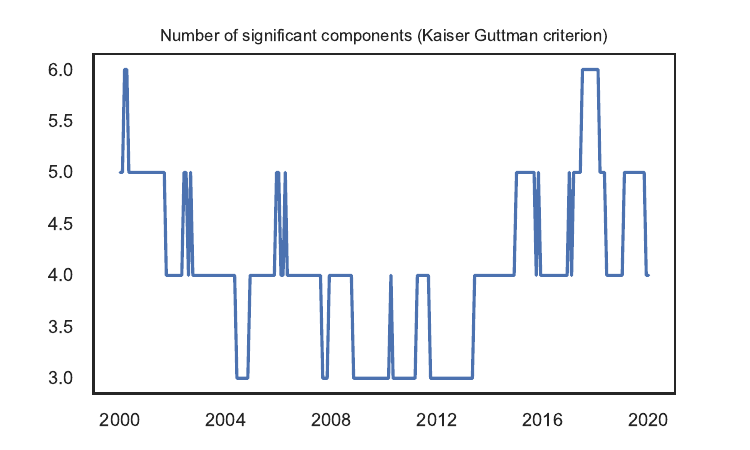}
  \includegraphics[width=.535\textwidth]{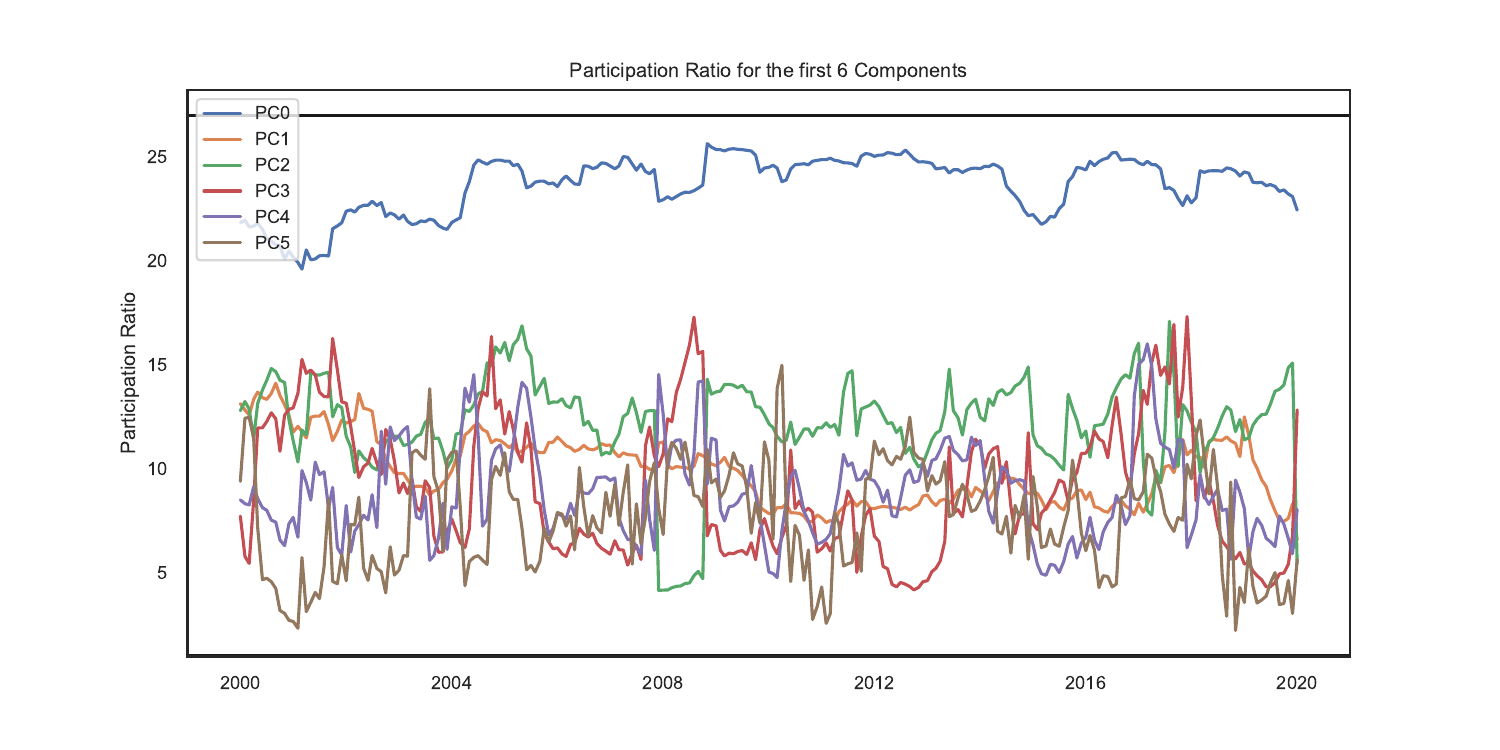}
  \caption{% Left: Cumulative percentage variance over time of first six
  % principal components. Right:
    Left: Number of significant PCs according to the Kaiser-Guttman
    criterion.  Right: Participation Ratios (PR's) of first six
    PCs. The black line (at 27) corresponds to the number of risk
    factors in the data set.}
  \label{fig:pcrelevance}
\end{figure}

In the following, we formalise some of the observations of Figure
\ref{fig:correlations2}. This is relevant if the choice of the number
of relevant PCs or their contribution to risk factors is not as
clear-cut as in the setting above. More specifically, we review
methods to determine the number of PCs that are considered relevant,
and given the number of relevant PCs, we determine the risk factors
that contribute to these PCs. We also refer to \citep{Fenn2011} who
determine the number of significant PCs and their contributions on a
data set consisting of stock indices, bond indices, currencies and
commodities.

The {\em Kaiser-Guttman criterion\/} \citep{Guttman1954} specifies a
PC to be significant, it its (normalised) eigenvalue is greater than
$1/d$, where $d$ is the number of eigenvalues. The underlying idea is
that for uncorrelated data, each PC explains a variance fraction of
$1/d$ and each PC would be considered significant. The left-hand side
of Figure \ref{fig:pcrelevance} shows the number of relevant PCs
according to the Kaiser-Guttman criterion.

The {\em participation ratio (PR)\/} measures the contribution of the
risk factors to the PCs. The {\em inverse participation ratio (IPR)\/}
of the $k$-th PC is defined as \citep{Fenn2011,Guhr1998}
\begin{equation*}
  I_k = \sum_{j=1}^d (\phi_{jk})^4.
\end{equation*}
The IPR measures the number of assets participating in a PC: An
eigenvector with equal contributions, i.e., $\phi_{jk}=1/\sqrt{d}$,
has $I_k=1/d$, while the situation of an eigenvector with only one
contribution, i.e., $\phi_{jk}=1$, for one factor
$j\in \{1,\ldots, d\}$ and $\phi_{jk}=0$, for all other factors, has
$I_k=1$. As such, the IPR interpolates between $1/d$ and $1$. The PR
is defined as $1/I_k$, taking values between $1$ and $d$. A small
(large) PR therefore indicates that few (many) assets contribute to
the PC. The PR's of the first six PCs over time are shown in the
right-hand plot of Figure \ref{fig:pcrelevance}. Given that there are
27 risk factors, nearly all risk factors contribute to the the first
PC. The number of factors contributing to the second PC varies between
4 and 17 over time, while approximately half of the factors contribute
-- quite stably over time --to the third PC, with the exception of
several months 2007-2008, where it was already noticed that the third
PC captures European risk factors.

The PR can be employed to provide an classify the PCs according to the
six categories Europe, Asia Pacific, North America, Emerging market,
cyclical industries and defensive industries. For a given PC and its
PR, define the {\em PR group\/} as the group of size PR of indices
with the highest correlations. Each category is then associated with a
PC in exactly one of four ways:
\begin{center}
  \renewcommand{\arraystretch}{2}
  \begin{tabular}{cc}
    \hline
    \cellcolor{Green}
    \begin{minipage}{.35\linewidth}
      \vspace*{.5\baselineskip}
      {\bf Strong In}: \\
      {\em All} indices in a category are in the PR group.
      \vspace*{.5\baselineskip}
    \end{minipage}
    & \cellcolor{Red}
      \begin{minipage}{.35\linewidth}
        {\bf Strong Out}: \\
        {\em No} indices in a category are in the PR group.
      \end{minipage}\\
    \cellcolor{LightGreen}
    \begin{minipage}{.35\linewidth}
      \vspace*{.5\baselineskip}
      {\bf Weak In}: \\
      {\em More than half} of indices in a category are in the PR
      group.  \vspace*{.5\baselineskip}
    \end{minipage}
    & \cellcolor{LightRed}
      \begin{minipage}{.35\linewidth}
        {\bf Weak Out}: \\
        {\em Half or less} of indices in a category are in the PR
        group.  \vspace*{.5\baselineskip}
      \end{minipage}\\\hline
  \end{tabular}
\end{center}

\begin{figure}[t]
  \centering \includegraphics[width=.45\textwidth]{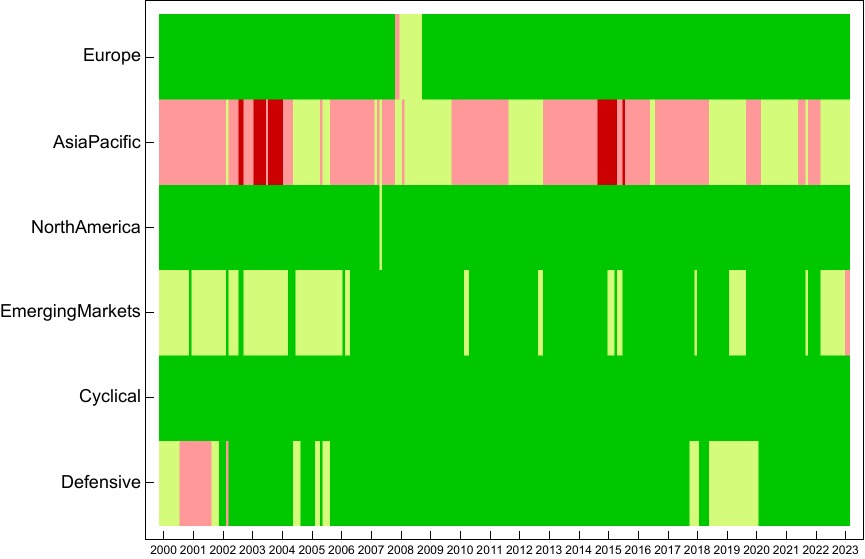}\ \
  \includegraphics[width=.45\textwidth]{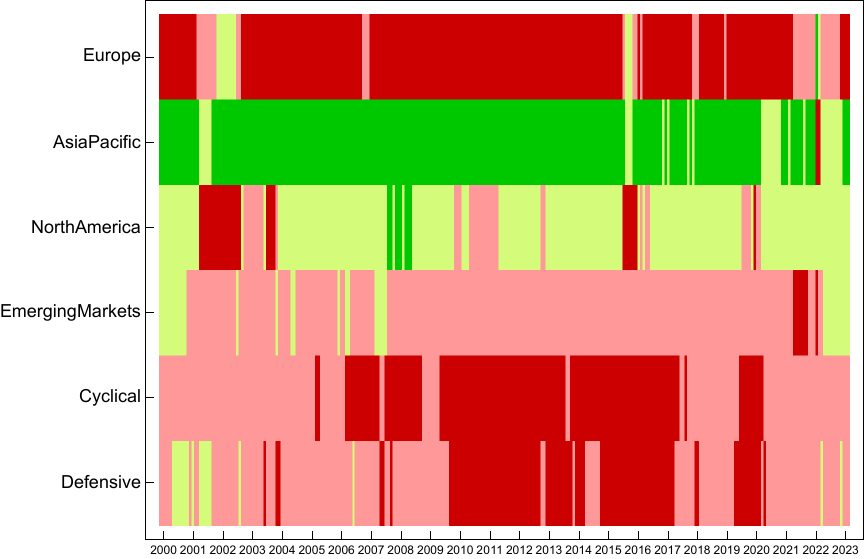}\medskip\\
  \includegraphics[width=.45\textwidth]{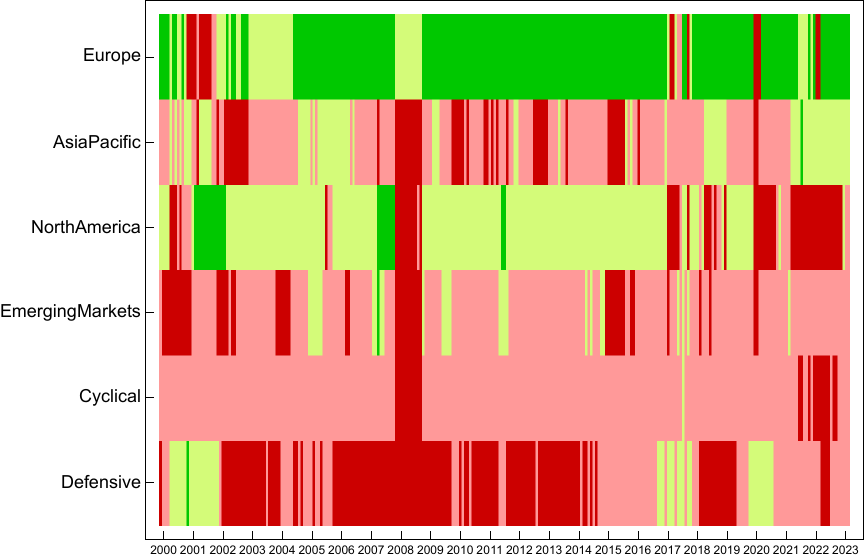}\ \
  \includegraphics[width=.45\textwidth]{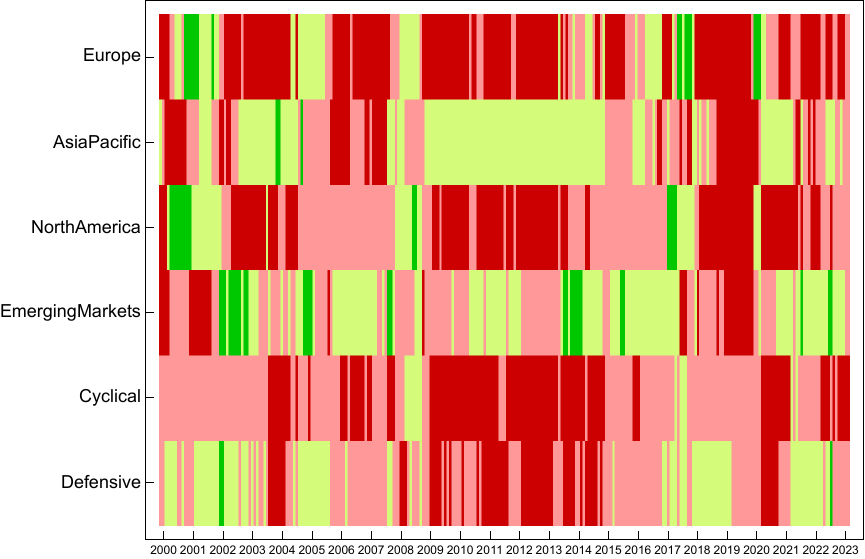}
  \caption{Assignments of stock index categories to PC explanation
    over time. Top left: first PC; top right: second PC; bottom left:
    third PC; bottom right: fourth PC.}
  \label{fig:pcainterpretation2}
\end{figure}

Figure \ref{fig:pcainterpretation2} shows the assignments of the
categories on a monthly basis over time. As can be seen, the first PC
(top left) acts as a global factors, mostly ex-Asia Pacific with few
exceptions. The second PC can be interpreted as the Asia Pacific
factor, capturing the correlation with North America. The thirds
factor explains Europe, capturing the joint behaviour of Europe and
North America. There is no clear pattern for the fourth factor.

\subsection{PCA within clusters}
\label{sec:pca-cluster}

The previous section revealed that plain PCA works well in building a
global risk factor, but does not succeed in establishing risk factors
for the individual risk categories (Europe, N.\ America, Asia-Pacific,
EM market, cyclical industries, defensive industries). This suggests
the following modification of PCA, which we call {\em clustered PCA\/}
in the following: conduct PCA on the data of each risk category
separately and combine the first PCs of each risk category into a
linear factor model. This approach is similar to the Hierarchical PCA
(HPCA) introduced by \citep{Avellaneda2020}.  More specifically, with
$K$ the number of clusters, denote by $z^{(k)}_{i1}$, $i=1,\ldots,n$,
the scores of the first PC of category $k$, $k=1,\ldots, K$, cf.\
\eqref{eq:4}.  The factor model is then established as (cf.\
\eqref{eq:5})
\begin{equation*}
  r_i = \alpha_i + \beta_{i1} z^{(1)}_{i1} + \cdots + \beta_{iK}
  z^{(K)}_{i1} + \varepsilon_i, \quad i=1, \ldots, n. 
\end{equation*}
Similar approaches have been devised by \citep{Enki2013}, see also
\citep{Mao2005,Masaeli2010,Chang2016}.

\begin{figure}[t]
  \centering \includegraphics[width=.9\textwidth]{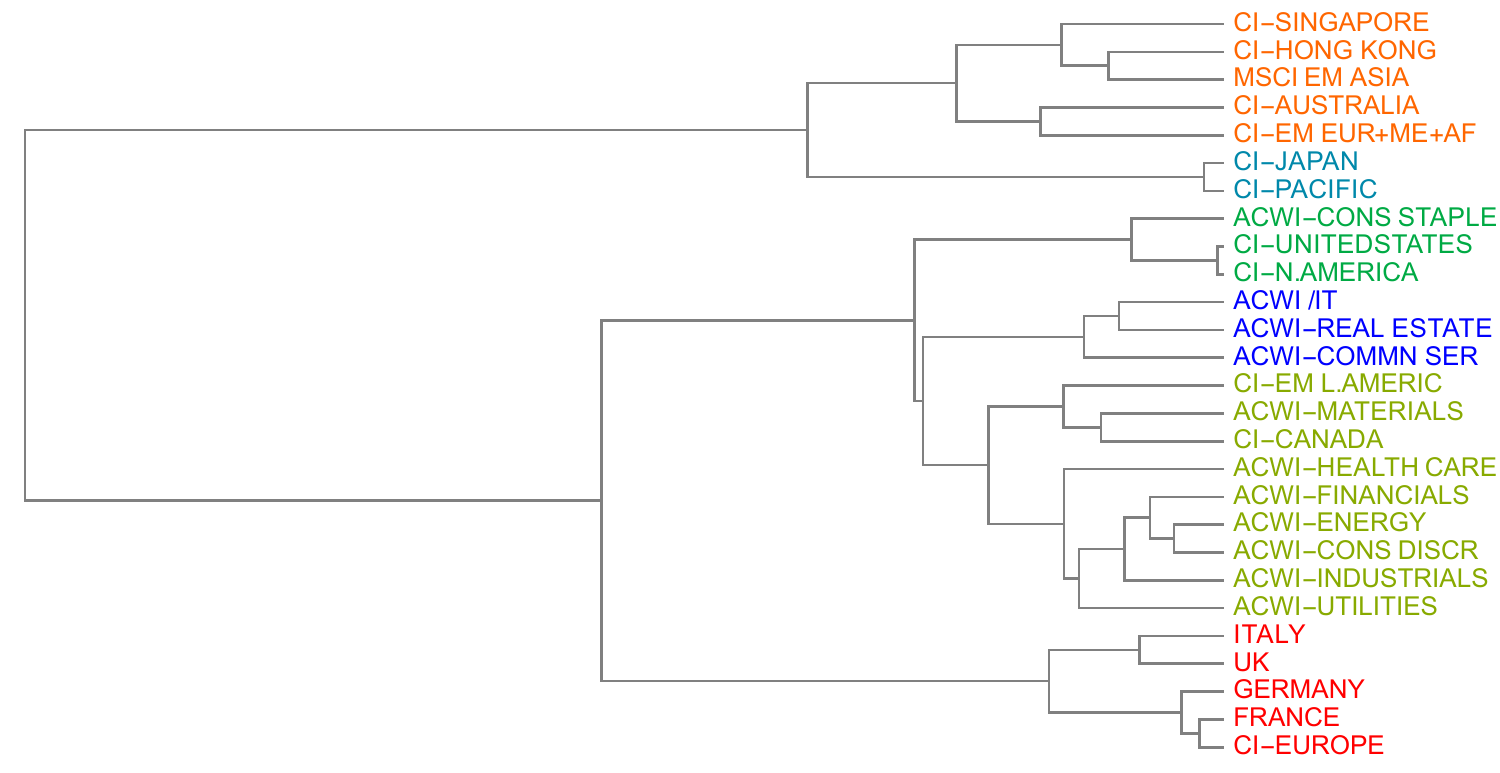}
  \caption{Dendrogram of the risk factors with Ward's minimum variance
    method as dissimilarity measure. The lengths of the connecting
    lines represent the degree of dissimilarity between
    clusters. Colour codes indicate the six clusters with highest
    dissimilarity.}
  \label{fig:clustering}
\end{figure}

If no categories are not readily available, then clustering methods
could be used to determine categories.  Figure \ref{fig:clustering}
shows the dendrogram obtained by hierarchical clustering using Ward's
minimum variance method as a dissimilarity measure
\citep{Ward1963}. This method minimises total within-cluster
variance. For details on Ward's method, we refer to
\citep{James2013,Johnson2007}. Truncating the dendrogram to give six
clusters yields clusters similar (but not equal) to the
categories. % The right graph of Figure \ref{fig:clustered_pca_1} shows
% the corresponding loadings.

\begin{figure}[t]
  \centering
  \includegraphics[width=.32\textwidth]{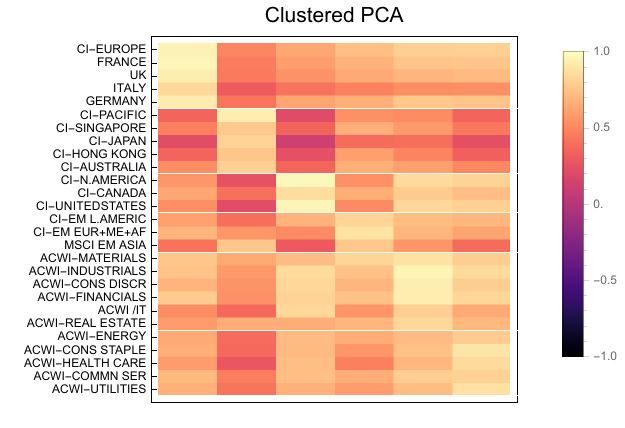}
  \includegraphics[width=.32\textwidth]{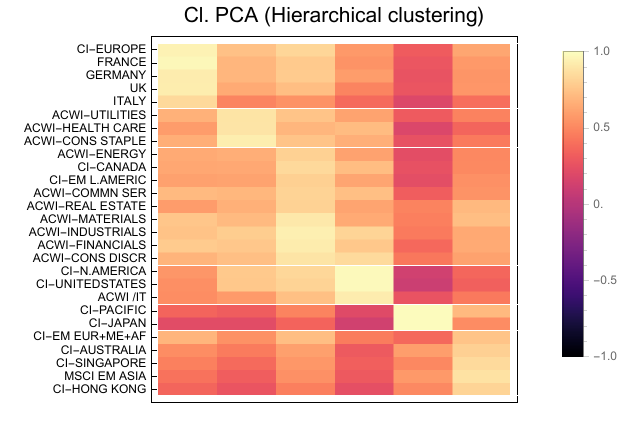}
  \includegraphics[width=.32\textwidth]{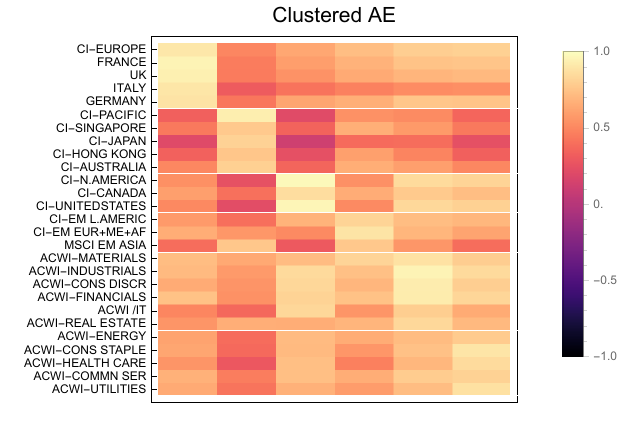}
  %% ClusteredPCA.nb
  %% ClusteredAE_final.nb
  \caption{Left: Correlations of six first PCs from PCAs with the
    pre-specified categories. Middle: Correlations of clustered PCA
    with categories derived from hierarchical clustering; right:
    Correlations with scores obtained from clustered AE.}
  \label{fig:clustered_pca_1}
\end{figure}

Figure \ref{fig:clustered_pca_1} shows the correlations of the
original risk factors with the latent factors from clustered PCA.  The
left graph shows the correlations of the six first PCs obtained from
the six pre-specified categories, while the middle graph shows the
correlations obtained from applying the hierarchical clustering. The
high correlations associated with each category to which the first PC
belongs, are clearly visible. 

\subsection{Autoencoder}
\label{sec:autoencoder}

\begin{figure}[t]
  \centering
  \includegraphics[width=.5\textwidth]{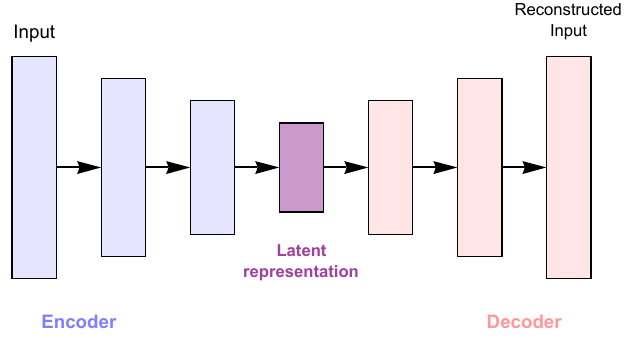}
  \caption[]{Schematic autoencoder representation. Source:
    Mathematica.\footnotemark{}}
  \label{fig:autoencoder_mathematica}
\end{figure}
\footnotetext{\url{https://reference.wolfram.com/language/ref/method/Autoencoder.html}}

\begin{figure}[t]
  \centering
  \includegraphics[width=.35\textwidth]{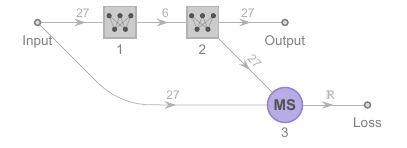}
  \includegraphics[width=.64\textwidth]{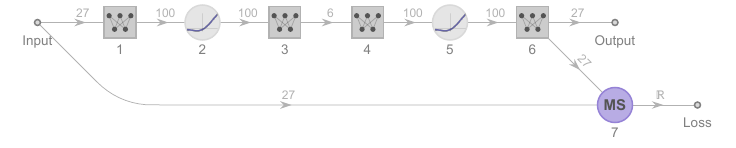}

  \includegraphics[width=.5\textwidth]{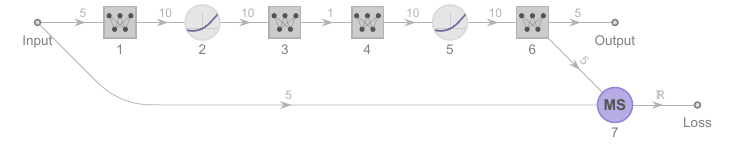}
  \includegraphics[width=.45\textwidth]{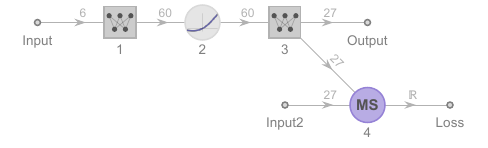}
  \caption{Top left: PCA implemented as an AE. The boxes denote linear
    layers with biases set to zero, while the circle labelled ``MS''
    calculates the mean square error (MSE) between its inputs. During
    training, the weights of the linear layers are adjusted to
    minimise the MSE output. The components of the network are
    numbered
    consecutively (indicated below each component). \\
    Top right: Optimal AE architecture obtained from training various
    architectures on the test data and validating on a separate
    validation set. The grey circles denote activation functions.\\
    Bottom left: Autoencoder of first category with bottleneck size
    one. The encoder part of AEs of each category (components 1, 2 and
    3) are taken as input to a joint decoder (bottom right). \\
    Bottom right: Joint decoder of clustered AE to reconstruct
    original data.}
  \label{fig:autoencoder_architectures}
\end{figure}

\begin{table}[t]
  \centering
  \begin{tabular}{llllll}
    Model & Specification & % Optimiser &
                                          L2-reg. & Batch size &  MSE \\\hline
    PCA & && & 0.1548\\
    AE & 100 / ``GELU'' & % Adam &
                                   $10^{-4}$ & 64& 0.1298\\
    AE with LSTM & {\small 100 / ``GELU'' / 27 / LSTM} & $1.25 \cdot 10^{-6}$ & 256 &
                                                                                      0.1411\\\hline
    Clustered PCA & & & & 0.1915\\
    Clustered AE & enc: 10 / ``Swish'' & - & 64 & 0.1741\\
          & dec: 60 / ``Swish''\\
    Clustered AE with LSTM & {\footnotesize enc: \# factors / ``SELU'' / LSTM} & $10^{-6}$ & 128 & 0.1782\\
          & {\footnotesize dec: LSTM / 27 / ``SELU'' / 27}
    
  \end{tabular}
  \caption{Autoencoder networks trained and the respective MSEs on the
    whole data set. The column ``Specification'' lists the layers of
    each network, where a number refers to a linear layer. ``GELU'',
    ``Swish'' and ``SELU'' refer to different activation functions. If
    not stated otherwise, then layers of the decoder are equal to
    layers of the encoder, in reverse order.  The ADAM optimiser was
    used throughout.}
  \label{tab:MSEs}
\end{table}

PCA and its variants capture linear relationships between risk factor
returns and the latent factors. To capture non-linear relationships,
we consider the {\em autoencoder (AE)}, a simple feedforward neural
network used to learn an efficient representation of data, typically
for the purpose of dimensionality reduction or feature learning. The
main idea is as follows: An AE consists of an encoder, which
compresses the data and a decoder, which reconstructs it. The final
layer of the encoder, called the {\em bottleneck\/} or {\em latent
  representation\/}, which connects to the decoder, has a smaller
dimension than the data. In this way, the AE learns a
dimension-reduced representation of the data. More specifically, an AE
consists of an encoder function $\mathbf h=f(\mathbf x)$ and a decoder
that produces a reconstruction $g(\mathbf h)$. The AE thus produces
$g(f(\mathbf x))$, and by restricting the code $\mathbf h$ to a lower
dimension than $\mathbf x$, the AE will not be able to copy
$\mathbf x$ exactly, but reproduce a copy that resembles the training
data.

As the AE is a feedforward neural network, it is composed of layers,
each of which applies a so-called activation function to a linear
function of its input and passes this as an output to the next
layer. Given $\mathbf x$, the output of the $i$-th layer is
\begin{equation*}
  \mathbf h^{(i)} = g^{(i)} \left(({\mathbf W}^{(i)})' \mathbf x +
    \mathbf b^{(i)}\right),
\end{equation*}
where $g^{(i)}$ is an activation function, ${\mathbf W}^{(i)}$ is a
matrix of weights and $\mathbf b^{(i)}$ is a vector of biases. Typical
activation functions are the sigmoid function, tanh, rectified linear
unit (ReLU), amongst many others, see e.g.\ \citep{Goodfellow2016}.

Figure \ref{fig:autoencoder_mathematica} shows a schematic
representation of an autoencoder.  Autoencoders go back to
\citep{Bourlard1988,Kramer1991}, see also \citep{Hinton2006}, which
presents a technique for effectively training deep AEs; see also the
book by \citep{Goodfellow2016} for a concise reference.

A shallow linear network architecture without activation functions
captures only the linear relationship between its input and the
bottleneck, and therefore yields the same dimension-reduced
representation of the data as PCA, see the left-hand-side of Figure
\ref{fig:autoencoder_architectures}. One can therefore think of
autoencoders as a generalisation of PCA. However, it should be taken
into account that the standard AE does not produce orthogonal latent
factors and therefore the code, i.e., the output of the bottleneck
layer, is not, in general, equal to the PC scores.

Because the data consists of return time series, the flexibility of
the AE architecture lends itself to more sophisticated layers that
take into consideration temporal dependencies. Introducing long
short-term memory (LSTM) layers, which are recurrent and as such
suitable for sequential data, is therefore a natural extension. An
LSTM cell is composed of four units: the input gate, the output gate,
the forget gate and the self-recurrent neutron. The LSTM goes back to
\citep{Hochreiter1997} and the LSTM-AE is described for example by
\citep{Sagheer2019}.

\subsection{Clustered AE}

Just as with clustered PCA, we modify the AE so it produces a code for
each category and join them into one decoder. The first step -- one AE
for each category -- is similar to the step of applying PCA separately
to each category, while the second step -- a unified decoder --
corresponds to the linear model reproducing the full original data.
An example of the separate AE is shown in the bottom left of Figure
\ref{fig:autoencoder_architectures}. The bottom right shows the joint
decoder. The right graph of Figure \ref{fig:clustered_pca_1} shows the
correlations obtained from the scores of each categorie's AE with the
original risk factors.

\subsection{Calibration results}
\label{sec:calibration-results}

Table \ref{tab:MSEs} shows the specification and mean square errors
(MSEs) of the PCA, AE, AE with LSTM models as well as their clustered
counterparts. To determine the optimal network architecture, the AE
networks were first all trained on a test data set comprising 80\% of
the data and validated on the remaining 20\%. This involved different
numbers of layers, different sizes of layers, different activation
functions as well as different parameters for the optimiser, such as
the parameter for $L2$-regularisation. Choosing the architecture with
the smallest validation MSE, the final network was trained on the
whole data set. The final activation functions chosen are:
\begin{itemize}
\item ``GELU'': Gaussian error linear unit, with activation function
  $g(x) =\displaystyle\frac{1}{2} x \left(1+
    \text{erf}\left(\frac{x}{\sqrt{2}}\right)\right)$, where
  $\text{erf}(x)=\displaystyle \frac{2}{\sqrt{\pi}} \int_0^x
  \e^{-t^2}\, \dd t$ is the error function;
\item ``SELU'': scaled exponential linear unit, with activation
  function
  \begin{equation*}
    g(x)=\displaystyle
    \begin{cases}
      1.0507 x,&\text{ if } x\geq 0,\\
      1.7581 (\e^x-1)&\text{ if } x<0.
    \end{cases};
  \end{equation*}
\item ``Swish'': $g(x) = \displaystyle\frac{x}{1+\e^{-x}}$.
\end{itemize}

Interestingly, the optimal AE and clustered AE consist of just one
layer each in the encoder and in the decoder. Deep AEs, with several
hidden layers, exhibit a low training MSE and a high validation MSE,
an indication of overfitting. The added flexibility when compared to
PCA lies in the greater size of the linear layer compared to the input
as well as non-linear activation functions. Although adding an LSTM
layer provides a better fit than PCA, this does not outperform the
simpler AE. Summarising, a simple AE architecture with non-linear
activation provides the best fit, indicating the presence of
non-linear relationships in the data.

\section{Stress testing application}
\label{sec:stress-test-appl}

As an application of the aggregated risk factors, we conduct several
stress scenarios on equally-weighted portfolios of DAX firms as well
as S\&P 500 firms.  In order to be able to include firms that were
recently added to the stock indices, only 750 observations of daily
returns (approx.\ three years) were used to calibrate the factor
models \eqref{eq:5}. Given that the AE models capture non-linearities,
one would of course prefer to train decoder type networks (cf.\ bottom
right of Figure \ref{fig:autoencoder_architectures}), but many stocks
of interest do not have a sufficiently long data history and training
on 750 observations turns out to be very unstable.\footnote{ One could
  attempt to replace the missing data with synthetic data, which is a
  popular technique in machine learning, when the training data set is
  too small; see e.g.\
  \citep{Dogariu2022,Ni2021,Freeborough2022,Park2022} } Overall, 27
DAX firms and 496 S\&P 500 firms are included in the portfolios. The
following three scenarios are considered:
\begin{itemize}
\item Global stress scenario: The first two PCs are chosen to capture
  global risk. Scenarios where linear combinations of the factors are
  2 standard deviations from the mean are considered. This is equal to
  considering all joint scenarios that correspond to a Mahalanobis
  distance of 2 standard deviations from the mean (see e.g.\
  \citep{McNeil2015} for the Mahalanobis distance). Formally, for each
  portfolio, the worst outcome of these scenarios is chosen:
  \begin{equation*}
    \min_{\mathbf s=(s_1,s_2)'\in \R^2} \frac{1}{p} \sum_{i=1}^p
    r_i(s_1,s_2)=\frac{1}{p}\sum_{i=1}^p \alpha_i +
    \beta_{i1} F_1(s_1) + \beta_{i2} F_2(s_2), 
  \end{equation*}
  subject to
  $\displaystyle\sqrt{(\mathbf s-\mathbf\mu_F)'\, \Sigma_F^{-1}\,
    (\mathbf s-\mathbf\mu_F)}\leq 2$, which constrains the square root
  of the Mahalanobis distance. For both portfolios, the constraint
  turns out to be binding.
\item European stress scenario: the European risk factor in the
  clustered variants of PCA, AE and AE with LSTM is bumped by PC /
  clustered AE -2 standard deviations.
\item Cyclical industries stress scenario: the risk factor
  representing cyclical industries is bumped by -2 standard
  deviations.
\end{itemize}

\begin{figure}[t]
  \centering
  \includegraphics[width=.475\textwidth]{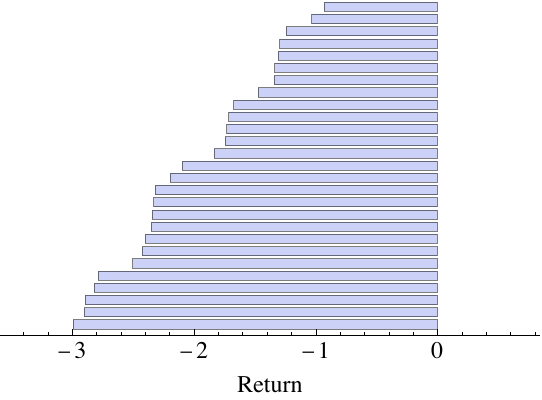}
  \includegraphics[width=.475\textwidth]{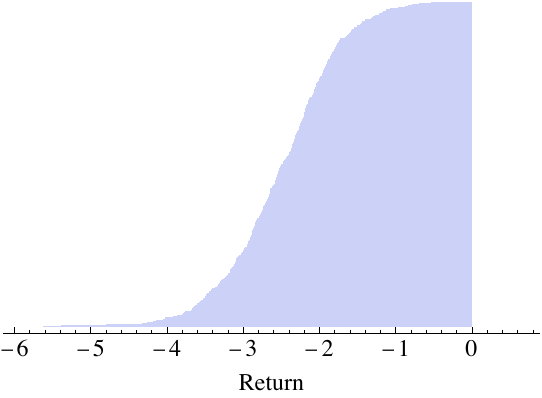}
  %% StressTest_pca.nb
  \caption{Impact of a two standard deviation downturn of global risk
    factor (first PC) on German DAX constituents (left) and S\&P 500
    constituents (first and second PC) (right).}
  \label{fig:globalpca}
\end{figure}

\begin{table}[t]
  \begin{center}
    \begin{tabular}{llll}
      % \hline
      \multicolumn{4}{c}{\bf Global stress scenario}\\\hline
      Index & First PC & Second PC & Impact\\\hline
      DAX& -2 s.d. &  0  & -2\%\\
      S\&P 500 & -1.7701 s.d. & -0.9309 s.d. & -2.2273\%
    \end{tabular}
  \end{center}
  \begin{center}
    \begin{tabular}{lccc}
      % \hline
      \multicolumn{4}{c}{\bf European stress scenario}\\\hline
      Index & Cl.\ PCA & Cl.\ AE & Cl.\ AE w/ LSTM \\\hline
      DAX & -2.5788\% & -1.99\% & -2.8267\% \\
      S\&P 500 & -1.5173\% & -1.3649\% & -1.9579\%
    \end{tabular}
  \end{center}
  \begin{center}
    \begin{tabular}{lccc}
      % \hline
      \multicolumn{4}{c}{\bf Cyclical industries stress scenario}\\\hline
      Index & Cl.\ PCA & Cl.\ AE & Cl.\ AE w/ LSTM \\\hline
      DAX & -2.0389\% & -1.7608\% & -2.0521\%\\
      S\&P 500 & -2.2277\% & -2.0008\% & -2.1934\%
    \end{tabular}
  \end{center}
  \caption{Mean impact on equally-weighted portfolio of different
    stress scenarios. The global stress scenarios are selected as the
    worst-impact scenarios 2 standard deviations from the mean. For
    the European and the cyclical industries stress scenarios, the
    respective risk factor was bumped by -2 standard deviations.}
  \label{tab:stressscenarios}
\end{table}

Figure \ref{fig:globalpca} shows the impact of the global stress
scenario on the individual stocks ordered from greatest impact to
lowest impact from bottom to top. The impact of all stocks is
negative. Table \ref{tab:stressscenarios} shows the impact on
equally-weighted portfolios. The worst scenario for the DAX
corresponds to a shift in the first PC, but no impact on the second
PC. For the S\&P 500 both PCs are shifted, with a stronger shift on
the first PC. Approx. 4\% of all realisations of the first PC are more
extreme than -2 standard deviations, so observing a daily return
smaller than the impact of 2\% should be expected to occur on ten days
per year.

\begin{figure}[t]
  \centering
  \includegraphics[width=.475\textwidth]{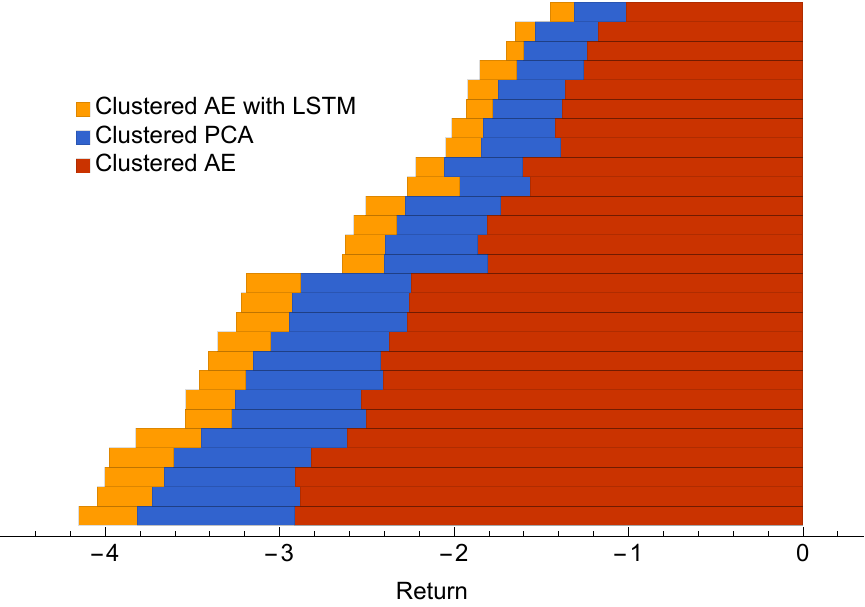}
  \includegraphics[width=.475\textwidth]{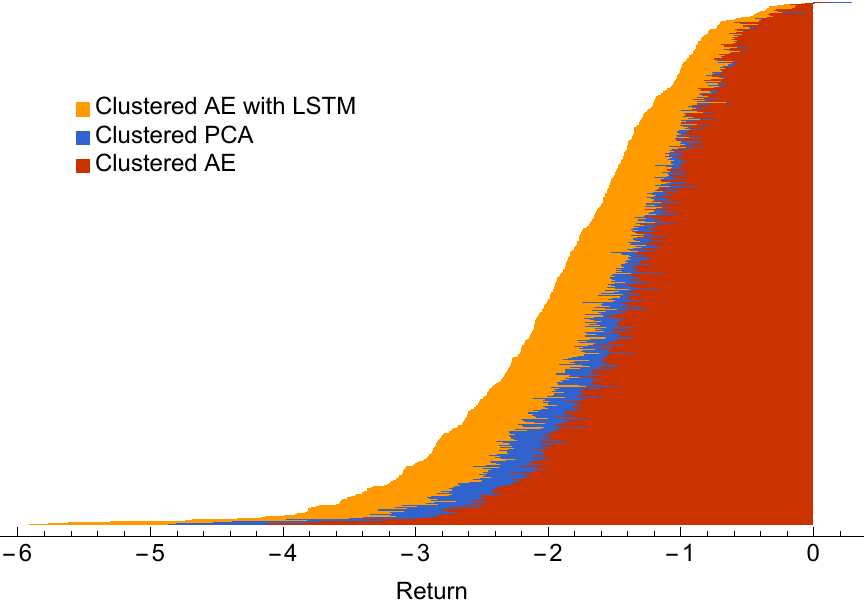}

  \includegraphics[width=.475\textwidth]{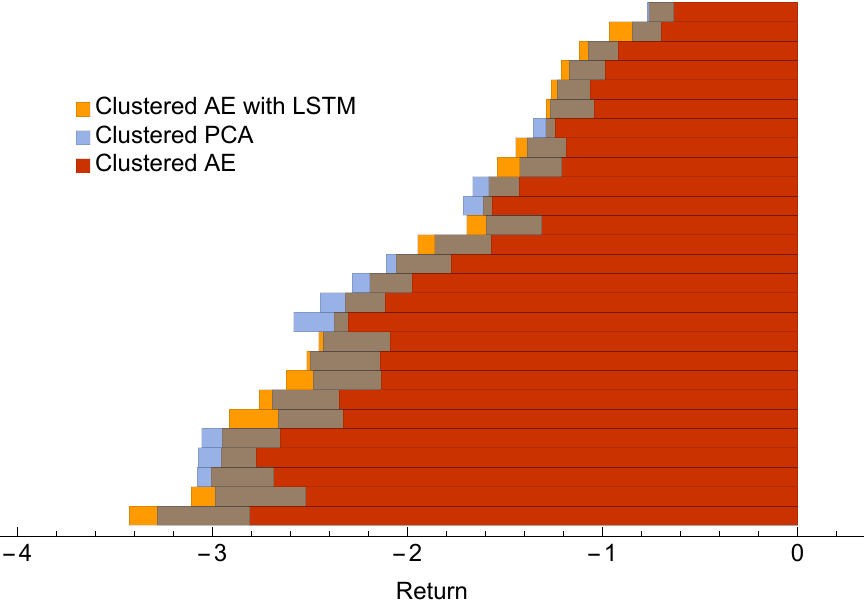}
  \includegraphics[width=.475\textwidth]{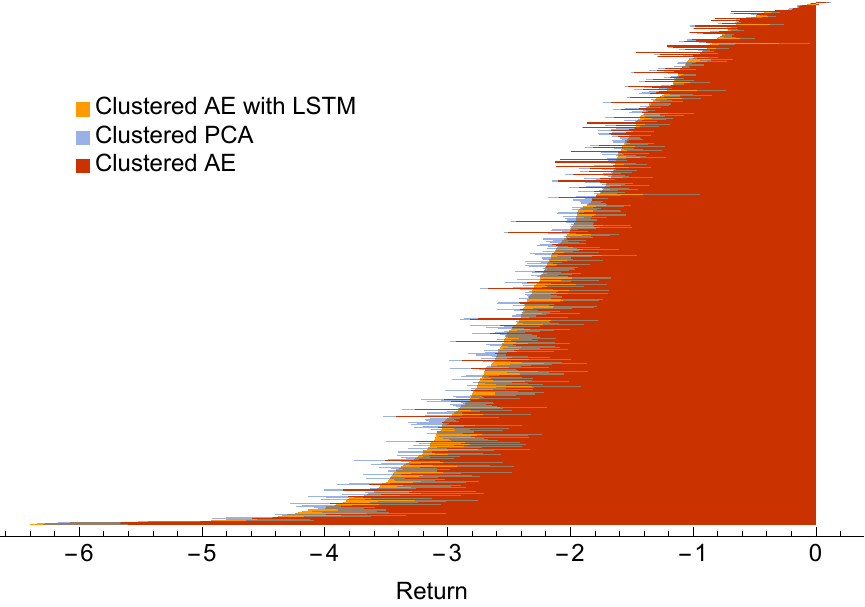}
  \caption{Impact of European stress scenario (top) and cyclical
    industries scenario (bottom) on DAX stocks (left) and S\&P 500
    stocks (right).}
  \label{fig:stressscenarios}
\end{figure}

For the other two stress scenarios, the European, resp.\ cyclical
industries, factor is considered the core risk factor and the
remaining five factors are the peripheral factors. In the case of
clustered PCA, the impact on the peripheral factors was calculated by
Equation \eqref{eq:6}, while for the clustered AEs the impact was
predicted from training a decoder-type network with the architecture
from the original clustered AE with the size of the linear layer
adjusted to the smaller sizes of the input and output layers.

The top of Figure \ref{fig:stressscenarios} shows the outcome of the
European stress scenario on the individual stocks. Approx.\ 6\% of the
realisations of the European risk factor are more extreme than the
adverse scenario of -2 standard deviations from the mean. This
expresses that an outcome at least as extreme as the observed scenario
should be expected to occur on 15 trading days per year. Depending on
the model used, the impact on an equally-weighted portfolio lies
between -2.83\% and -2\% for the DAX portfolio and in the range of
-1.96\% to -1.36\% for the S\&P 500 portfolio (see Table
\ref{tab:stressscenarios}). The clustered AE with LSTM produces the
most extreme impact, while the AE produces the least impact with the
clustered PC in-between.

Finally, for the cyclical industries factor, the bottom of Figure
\ref{fig:stressscenarios} indicates less variation in the impact of
clustered AE with LSTM and clustered PCA, with the clustered AE
producing slightly weaker impact. A scenario of -2 standard deviations
occured only twice in the 750-day period under consideration (with the
standard deviations determined from the longer history of observations
going back to 1999), which corresponds to 0.3\% of observations. This
suggests that a higher number of extreme outcomes of the cyclical
industries occured prior to the 750-day period. The impact is slightly
weaker on the DAX portfolio, ranging from -2.05\% to 1.76\%, than on
the S\&P 500 portfolio, ranging from -2.22\% to -2\%.

\section{Further applications and conclusion}
\label{sec:conclusion}

The aggregated risk factors derived in Section
\ref{sec:princ-comp-analys} have further applications, essentially in
all areas of finance where factor models are employed. For example,
estimating the $p\times p$ covariance matrix $\Sigma$ of asset returns
directly is noisy (i.e., subject to high confidence intervals), and
can be stabilised through factor models by the approximation
\begin{equation*}
  % \label{eq:3}
  \Sigma \approx B\Omega B',
\end{equation*}
where $B$ is the $p\times d$ matrix of factor coefficients, $\Omega$
is the $d\times d$ covariance matrix of the risk factors and the
variances of the residuals are ignored. Reducing the number of factors
through aggregated factors may give an even more robust estimate.

The clustered AE can also be applied in ML and AI applications beyond
the finance space. Consider a prediction problem with a high number of
features. Dimension reduction methods, such as the AE, are commonly
used to increase the efficiency of training, but come at the expense
of losing interpretability of the driving factors. Combining
dimension-reduction techniques with clustering can be employed to
create aggregated features that are explainable.

To conclude, we employ and modify methods from unsupervised learning
to create aggregated risk factors from observable risk factors, such
as geographic regions and industries. More specifically, a global risk
factor is produced using PCA and AE, while modified versions, called
clustered PCA and clustered AE allow for constructing several
aggregated risk factors relating to larger geographic areas (e.g.\ a
European factor) and categories of industries, such as cyclical and
defensive industries. In all instances, AEs improve the representation
of the original risk factors through the aggregated factors over PCA,
which indicates the presence of non-linear effects. Incorporating an
LSTM component, however, to capture temporal dependencies, did not
outperform the simpler AE versions.

The aggregated risk factors can be used to build high-level stress
scenarios. This is demonstrated on portfolios consisting of DAX and
S\&P 500 components by considering a global stress scenario, a
European stress scenario and a stress scenario on cyclical industries.

\appendix

\section{Scatter plots of risk factors}
\label{sec:scatter-plot-risk}

\begin{figure}[t]
  \centering
  \includegraphics[width=\textwidth]{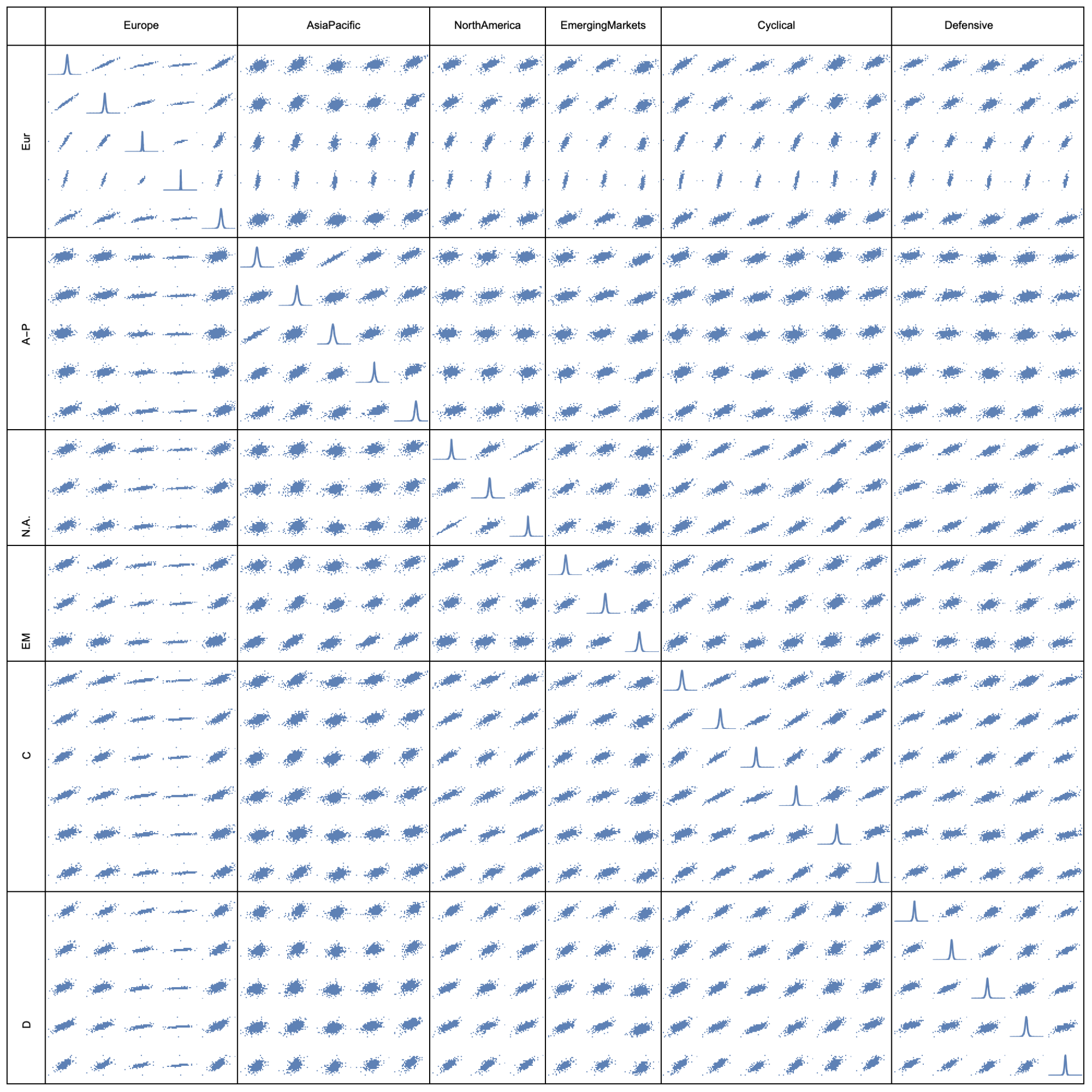}
  \caption{Scatter matrix of the 27 risk factors. Smooth kernel
    densities of each time series are shown on the diagonal, while the
  off-diagonal elements contain scatter plots of the respective time
  series. }
  \label{fig:scattermatrix}
\end{figure}

\bibliographystyle{abbrvnamed} %
\bibliography{finance} %

\end{document}